\documentclass[a4paper, 11pt]{article}
\usepackage{fullpage}
\usepackage{authblk}
\usepackage{graphicx}
\usepackage{nccmath}
\usepackage[hang, small, bf]{caption}
\setlength{\captionmargin}{20pt}
\usepackage{color}
\usepackage[usenames,dvipsnames,svgnames,table]{xcolor}
\usepackage[colorlinks=true,
            linkcolor=blue,
            urlcolor=yellow,
            citecolor=blue]{hyperref}

\usepackage{hyperref}
\usepackage{tikz}
\usepackage{lmodern}
\usepackage{amssymb,amsmath,amsfonts,amsxtra,amsthm}
\usepackage{pstricks,pst-node,pst-text,pst-3d}
\usepackage{wasysym}
\usetikzlibrary{arrows,positioning,automata,shadows,fit,shapes}
\usepackage[utf8]{inputenc}                             
\usepackage[T1]{fontenc}
\usepackage[all,cmtip]{xy}
\usepackage{tikz-cd}
\usepackage[round]{natbib}
\usepackage[british,UKenglish,USenglish,english,american]{}
\usepackage{amsmath,amsthm,amssymb,amsfonts}

\newcommand{\argmax}[2]{%
\smash{\mathop{{\rm argmax}}\limits_{#1}}\, #2 }

\theoremstyle{definition}

\newtheorem{Theorem}{Theorem}
\newtheorem{Proposition}{Proposition}
\newtheorem{Lemma}{Lemma}

\newtheorem{Remark}{Remark}

\usepackage{multirow,array}
\usepackage[textwidth=30mm]{todonotes}

\usepackage{soul}
\usepackage[utf8]{inputenc}
\usepackage[english]{babel}
\setstcolor{red}
\sethlcolor{SkyBlue}

\setlength {\marginparwidth }{2cm}

\begin{document}

\title{A Story of Consistency:\\ Bridging the Gap between Bentham and Rawls Foundations}

\author[1]{St\'ephane Gonzalez }
\affil[1]{Univ Lyon, UJM Saint-Etienne, GATE UMR 5824, F-42023 Saint- Etienne, France, \texttt stephane.gonzalez@univ-st-etienne.fr,}
\author[2]{Nikolaos Pnevmatikos}
\affil[2]{LEMMA, Universit\'e Paris 2, Panth\'eon-Assas, 4 rue Blaise Desgoffe, Paris, France, nikolaos.pnevmatikos@u-paris2.fr}

\date{Version of \today}
\maketitle

\begin{abstract}
The axiomatic foundations of Bentham and Rawls solutions are discussed within the broader domain of cardinal preferences. It is unveiled that both solution concepts share all four of the following axioms: \textit{Nonemptiness}, \textit{Anonymity}, \textit{Unanimity}, and \textit{Continuity}. In order to fully characterize the Bentham and Rawls solutions, three variations of a \textit{consistency criterion} are introduced and their compatibility with the other axioms is assessed. Each expression of consistency can be interpreted as a property of decision-making in uncertain environments.
\end{abstract}

\textbf{Keywords}: utilitarianism, egalitarianism, axiomatization, consistency

\

\textbf{JEL Classification}: D71

\vspace{0.5cm}

\section{Introduction} \label{intro}

\hspace{0.4cm} The design of methods for comparing and evaluating the various alternatives available to a given society has been widely studied in social choice theory. A social welfare criterion is often used to measure the utility or benefit of each alternative for the society as a whole. Two criteria, drawn from moral philosophy, have been particularly influential: \cite{bentham1970introduction} utilitarian criterion (henceforth Bentham solution), which considers the utilities of all individuals and chooses the alternative that maximizes the sum of these utilities, and \cite{rawls1971atheory} egalitarian criterion (henceforth Rawls solution), which defends the alternative that provides the greatest benefit to the least well-off member of the society. \\
There is a sharp contrast between these two concepts when it comes to their attitude toward equitable distribution. While utilitarianism is neutral in respect of utility inequality, the egalitarian principle is extremely inequality-averse by paying attention to the worst-off individual only. These two approaches gave rise to a vast literature on social choice, including axiomatic characterizations which attempt to justify their value on the basis of desirable normative principles. \cite{harsanyi1955cardinal}'s aggregation result on utilitarianism constitutes a milestone in this regard. As a result, a significant branch of the literature has been generated, focusing on Bentham solution's axiomatics, in particular with regard to the collective decision making under uncertainty and the traditional economic distributional issues. Along this line, notable examples include \cite{d1977equity}, \cite{maskin1978theorem}, \cite{blackorby1982ratio}, \cite{mongin1994harsanyi}, \cite{weymark1990reconsideration} and \cite{weymark2005measurement}. On the contrary, comparatively few works in the literature provide axiomatic characterisations of the Rawls solution; see, for instance, \cite{hammond1979straightforward}, \cite{d1977equity}, \cite{gevers1979interpersonal}, \cite{ou2018generalized} and \cite{mongin2021rawls}. The reader may consult \cite{d2002social} or \cite{kamaga2018utilitarianism} for an extensive survey on the topic of Bentham and Rawls solutions' axiomatics. 

A first attempt to compare or reconcile the views of Bentham and Rawls, with an aim to recognize the axioms that characterize both solutions simultaneously can be found in \cite{d1977equity} where the authors show that, in the context of social welfare functionals, utilitarianism and the lexical version of the Rawlsian criterion share the properties of \textit{Independence of Irrelevant Alternatives}, \textit{Anonymity} and \textit{Strong Pareto} and there is a single axiom differentiating them; on the one hand, utilitarianism needs the \textit{Individual Origin of Utilities} and on the other, egalitarianism requires the \textit{Extreme Equity} axiom. Similar results are known to hold in a welfarist context where characterisations of the two solutions can be derived based on \textit{Strong Pareto}, \textit{Separability}, \textit{Anonymity} and a suitable informational invariance axiom; \cite{d1994welfarism} states that a theorem can be proved (see the proofs of \cite{d1985axioms} or \cite{d1977equity}) using other characterizations of leximin given by \cite{hammond1976equity} and \cite{strasnick1976social} showing that, ultimately, the only distinguishing property of leximin with respect to classical utilitarianism is the sort of information and the type of interpersonal comparisons it relies on, namely the replacing of \textit{Ordinal non-Comparability} by \textit{Co-Ordinality}, while utilitarianism requires \textit{Cardinality and Unit Comparability}.
\\
Thereafter, numerous articles provide axiomatizations of the two solutions along this path, such as \cite{sen1979utilitarianism}, \cite{deschamps1978leximin}, \cite{myerson1981utilitarianism}, or with the intention of identifying the conditions which ensure that these solutions intersect, such as \cite{kamaga2018utilitarianism}. We further mention \cite{bossert2020axiomatization} who study convex combinations of the two principles with a view to justify them axiomatically and \cite{gajdos2008ignorant} who characterize a convex combination of the utilitarian and egalitarian criteria in which individuals' utilities are cardinally measurable and fully comparable\footnote{The normalization of utility functions the authors obtain is very similar to that appeared in \cite{dhillon1999relative}, \cite{segal2000let} and \cite{moreno2008veil}.}. Earlier, \cite{roemer1996theories} and \cite{roemer1998equality} formalize a mechanism to construct equal-opportunity policies, which could be seen as a compromise between utilitarian and egalitarian criteria. More recently, \cite{silvestre2014utilitarianism} provides an analysis of consumers with either quasi-linear or Gorman-form preferences and shows that under some technical conditions the utilitarian maximum implies equal utilities.

Restrictions on the preferences of agents can be a topic of debate, with different perspectives on the advantages and disadvantages of each approach. One such debate is whether to use ordinal preferences or utility functions to represent the agents' preferences. 
\\
On the one hand, ordinal preferences are often seen as a lean structure bearing in mind that they only require agents to be able to rank alternatives. This simplicity leads to a better understanding and interpretation of the results derived from models in this setting. Moreover, ordinal preferences have been broadly used in the literature, and there is a wealth of existing results and methods that can be applied in this frame. On the other hand, utility functions have the privilege of being able to express the intensity of preferences for each agent. This additional information brings out a more detailed and nuanced comprehension of how agents make decisions. Prominent papers relevant to the latter direction include \cite{maskin1978theorem}, \cite{kamaga2018utilitarianism}, \cite{blackorby1982ratio}. Cardinal preferences allow us to come by similar results to the ones obtained within the ordinal preferences, which are nevertheless mathematically easier to establish. Making use of utility functions to express preferences permits to explore decision-making problems in a more thorough manner, ultimately leading to deeper insights of how agents make decisions.
\\
The purpose of this article is to provide some novel axiomatic characterizations of the Bentham optima and Rawls max-min solutions within the context of utility profiles, following thus an approach that differs from previous ones - which generally rely on ordinal preferences profiles - to the extent that our setting is simple and only requires a basic knowledge of utility function profiles and elementary mathematical tools.
\\
Our intention is to put the spotlight on two solution concepts, which are typically seen as being in conflict, yet they are alike in view of their axiomatizations. Namely, we show that the Bentham and Rawls solutions have four axioms in common, which are well-known in the literature of social choice theory: \textsf{Nonemptiness}, \textsf{Anonymity}, \textsf{Unanimity}, and \textsf{Continuity}. \textsf{Nonemptiness} ensures that at least one alternative is selected as the solution. \textsf{Anonymity} guarantees that all agents are treated equally. \textsf{Unanimity} makes sure that if an alternative is preferred by every agent, then it is selected as a solution, which corresponds to a weaken version of Pareto optimality concept. \textsf{Continuity} requires that the solution is stable with respect to small perturbations in the utility profile.
\\
We introduce a consistency principle such that if two distinct utility profiles share a common alternative as a solution, then this alternative should also be a solution in the aggregated utility profile. This rule is akin to the \textit{mixture preserving property} introduced by \cite{harsanyi1955cardinal} and widely used in the literature which characterizes the \cite{vontheory} utility functions. It requires the aggregated utility of a convex combination of inputs to be equal with the convex combination of the outputs, which is a fundamental property of the mathematical framework of cardinal utility. Our principle is closely related to the \textit{consistency property} recently studied by \cite{brandl2019justifying} in the setting of two-player zero-sum games where consistency is established by requiring that if both players choose the same strategy pair in two given games, then this pair should also be applied in a new game where the payoffs are determined by the randomization of the payoffs of the two primal games.\\
Throughout this paper we consider agents dealing with a situation where their utility profile is conditional on a random event. We introduce three variations of consistency according to the type of beliefs whereby the agents think that each utility profile may occur.
\begin{enumerate}
    \item The first variation is called \textsf{Subjective Expected Consistency}. Whenever an alternative is selected in two different utility profiles, it indicates that this alternative should also be selected in the expected utility profile, regardless of possibly heterogeneous agents' beliefs on the occurrence of each profile. It seems to be compelling when personal beliefs or opinions about the likelihood of a particular event occurring vary from agent to agent and are influenced by agent's experiences, knowledge, and preferences. Precisely, given two utility profiles and a probability $p_i$ over them for each agent $i \in N$, the solution of the expected utility profile under $\textbf{p}=(p_1,...,p_n)$ must match the intersection of the solution sets of the two utility profiles, unless it is empty. The \textsf{Subjective Expected Consistency} principle is founded on the notion of subjective probability, that is to say, a type of probability derived from an agent's personal judgment or own experience about whether a specific event is likely to occur.
      \item The second variation is called \textsf{Objective Expected Consistency}. It restricts the idea of the former variation to the case of objective probabilities. It is particularly relevant to the situation wherein the utility profile is determined by a random event such as a coin toss. It points out that an alternative which is selected in both utility profiles should be also selected before the coin toss. In formal terms, given two utility profiles and a probability $p$ over them, the solution set of the expected utility profile under $p$ must match the intersection of the solution sets of the two utility profiles, unless it is empty. The \textsf{Objective Expected Consistency} principle is based on the concept of objective probability, simply put, a probability which is related to objective observable facts and independent of any individual's beliefs or preferences.    
    \item The third variation of the consistency principle, referred to as \textsf{Minimum Consistency}, replaces the expected utility profile by the worst-case scenario profile. It stipulates that an alternative which is selected in both utility profiles, should also be selected in the worst-case scenario between the two utility profiles. In the worst-case scenario profile, at any given alternative, each agent's utility is defined as the lowest among the two utility profiles. In  particular, this variation is useful when individuals have no beliefs on the probability of the random event, since it allows to make decision-making in uncertain situations which are not relying on subjective expectations.
\end{enumerate}

\

\subsection*{Illustrative example}
To illustrate our result, consider a situation where two agents, Jeremy and John, collaborate on a research project. In doing so, they meet in Oxford to work together. They can either work at the laboratory of the university ($\ell$) or at the caf\'e of their neighborhood ($c$). The preferences of the two agents regarding the working place depend on the weather and are represented by their utilities over the two possible alternatives. When the weather is sunny, the utilities of the two agents are given by: 
\begin{center}
$\textbf{u}=\begin{bmatrix}
   u_1(\ell)  & u_1(c) \\
   u_2(\ell)  & u_2(c)
\end{bmatrix}$
\end{center}
and when it is raining their utilities are given by:
\begin{center}
$\textbf{v}=\begin{bmatrix}
   v_1(\ell)  & v_1(c) \\
   v_2(\ell)  & v_2(c)
\end{bmatrix}$
\end{center}

In order to make their choices, Jeremy and John need to design a decision process $\varphi$ which associates to each utility profile $\textbf{u}$ an alternative $\varphi(\textbf{u})$, i.e., laboratory ($\ell$) or caf\'e ($c$).\footnote{We allow $\varphi$ to return both alternatives in case of tie.} The two agents agree in the already discussed properties, \textsf{Nonemptiness}, \textsf{Anonymity}, \textsf{Unanimity} and \textsf{Continuity}. The last desirable criterion of \textsf{Consistency} depends on the three different contexts: 

\

\noindent \textbf{First context.} Jeremy and John do not share the same belief on the weather during their stay in Oxford. Jeremy believes that the weather will be sunny with probability $p_1$ and raining with probability $1-p_1$ and John thinks that the weather will be sunny with probability $p_2$ and raining with probability $1-p_2$, where $p_1 \neq p_2$. Their expected utilities can be represented by the matrix:
\begin{center}
$ \textbf{p}_{\{1,2\}} \cdot \textbf{u} + \big(I-\textbf{p}_{\{1,2\}}\big) \cdot \textbf{v} = \begin{bmatrix}
 p_1u_1(\ell)+(1-p_1)v_1(\ell) & p_1u_1(c)+(1-p_1)v_1(c)  \\
 p_2u_2(\ell)+(1-p_2)v_2(\ell) & p_2u_2(c)+(1-p_2)v_2(c)
\end{bmatrix}$
\end{center}
 In that case, \textsf{Subjective Expected Consistency} seems to be a desirable property. In line with our example, if Jeremy and John agree to go to the laboratory no matter the weather, then they should also choose the same alternative for any possible beliefs on the weather. Namely, if the alternative $(\ell)$, (resp. $(c)$), is selected when the weather is sunny, that is if $\ell$ (resp. $c$) belongs to $\varphi(\textbf{u})$ and if $\ell$, (resp. $c$), is selected when the weather is rainy, that is if $\ell$ (resp. $c$) belongs to $\varphi(\textbf{v})$ then $\ell$ (resp. $c$) is also selected by $\varphi$ for the expected utility profile $\textbf{p}_{\{1,2\}} \cdot \textbf{u} + (I-\textbf{p}_{\{1,2\}}) \cdot \textbf{v}$. Formally, 
 \begin{align*}\varphi(\textbf{u})\cap \varphi(\textbf{v})\subseteq \varphi\big(\textbf{p}_{\{1,2\}} \cdot \textbf{u} + \big(I-\textbf{p}_{\{1,2\}}\big) \cdot \textbf{v}\big).
 \end{align*}

\

 \noindent \textbf{Second context.} Jeremy and John follow the weather forecasting which indicates that it will be sunny with probability $p$ and raining with probability $1-p$. Their expected utilities can be represented by the matrix:
\begin{center}
$ p \textbf{u} + (1-p) \textbf{v} = \begin{bmatrix}
 pu_1(\ell)+(1-p)v_1(\ell) & pu_1(c)+(1-p)v_1(c)  \\
 pu_2(\ell)+(1-p)v_2(\ell) & pu_2(c)+(1-p)v_2(c)
\end{bmatrix}$
\end{center}
If so, \textsf{Objective Expected Consistency} seems to be a desirable property. That is, if Jeremy and John agree to go to the laboratory no matter the weather, then they should also choose the same alternative for any objective probability on the weather. Notice that this property corresponds to a weaker form of subjective expected utility, since consistency of decisions is required only when $p_1=p_2$. Following the same line of arguments, we obtain: $$\varphi(\textbf{u})\cap \varphi(\textbf{v})\subseteq \varphi(p \textbf{u} + (1-p) \textbf{v}).$$

\

 \noindent \textbf{Third context.} Jeremy and John have no clue on the weather and therefore no probability is assigned to the two possible events. Facing this situation, Jeremy and John adopt a more pessimistic approach leading them to compare their utilities at each alternative and consider the \textit{worst-case scenario profile}. This utility profile can be represented by the matrix:
\begin{center}
$\textbf{u}\wedge\textbf{v} = \begin{bmatrix}
 u_1(\ell) \wedge v_1(\ell) & u_1(c)\wedge v_1(c)  \\
 u_2(\ell) \wedge v_2(\ell) & u_2(c) \wedge v_2(c)
\end{bmatrix}$
\end{center}
where for any real numbers $a$ and $b$, the notation $a \wedge b$ stands for the minimum between $a$ and $b$. 
In that event, \textsf{Minimum Consistency} seems to be a desirable property. That is, if Jeremy and John agree to go to the laboratory independently of the weather, then they should also choose the same alternative in the worst-case scenario profile. In the same way, we end up with: $$\varphi(\textbf{u})\cap \varphi(\textbf{v})\subseteq \varphi(\textbf{u} \wedge \textbf{v}).$$

\

\cite{harsanyi1955cardinal}'s aggregation theorem aims to set up a social utility function which aggregates the ordinal preferences of the agents. It ensures that if agents conform to \cite{vontheory} \textit{expected utility model} (henceforth vNM's EU model), then the social utility function agrees with the vNM's EU model as well and further satisfies a Pareto condition as long as it is affine with respect to individuals' utility functions. This theorem, actively debated, has been the issue of many versions or extensions, mainly dealing with technical aspects of the proof or discussing the domains where the result still remains valid\footnote{We refer the reader to \citet{mongin1994harsanyi} for a survey of the literature on these debates.}. Several attempts have been done in the literature to simplify the proof of this result, as for instance by \citet{Hammond1992} and \citet{mandler2005}. Despite of these efforts, \citet{harsanyi1955cardinal}'s theorem still requires a quite demanding proof being kind of hard to understand. Theorem \ref{maintheorem}, against the backdrop of cardinal preferences, has a similar flavour to this result and it is easier to prove. Rather than a social utility function, we propose to design a choice function; the \textsf{Nonemptiness}, \textsf{Objective Expected Consistency} and \textsf{Continuity} axioms, claimed for our choice function, play an analogous role to the vNM's EU model assumption desired for the social utility function, while our \textsf{Unanimity} assumption is used in place of the Pareto's. Finally, our \textsf{Anonymity} axiom leads, as in \cite{harsanyi1955cardinal}'s result, to a rule that treats all agents equally.\\
On the contrary, if individuals instead of dealing with the vNM EU model, are complied with the \textit{max-min rule} then, following similar arguments and with the intention of designing a choice function, we show that such a way to aggregate should agree with the \textsf{Minimum Consistency} principle. The latter is reflected in Theorem \ref{maintheorem2}, established as well in the context of cardinal preferences. Alternatively stated, Theorem \ref{maintheorem2} ensures that max-min individuals are consistent with a choice function which applies the max-min rule to the worst case scenario utility profile. To the best of our knowledge, this procedure has never been done in the literature with respect to the egalitarian principle.

In this article, we provide three main results, each of which corresponds to one of these three different frames of reference. Proposition \ref{impossibilitytheorem}, demonstrates that it is not possible to reconcile the properties of \textsf{Subjective Expected Consistency}, \textsf{Nonemptiness}, \textsf{Anonymity}, \textsf{Unanimity}, and \textsf{Continuity} simultaneously. Theorem \ref{maintheorem} and Theorem \ref{maintheorem2} overcome this limitation. On the one hand, Theorem \ref{maintheorem}, characterizes the Bentham solution by replacing \textsf{Subjective Expected Consistency} with \textsf{Objective Expected Consistency}. On the other hand, Theorem \ref{maintheorem2}, characterizes the Rawls solution by using \textsf{Minimum Consistency} instead of \textsf{Objective Expected Consistency}. The proof of Theorem \ref{maintheorem} is akin to the one of Theorem \ref{maintheorem2}.

\

\noindent \textbf{Structure of the article:} Section \ref{charnash} deals with formal definitions and notations. In Section \ref{mainresults}, we state and prove our main results. Conclusion and remarks are provided in Section \ref{conclusion}.

\

\section{Preliminaries}\label{charnash}

In this section, we first introduce basic notations and present the definitions of Bentham and Rawls solution concepts. Thereafter, we provide and discuss the axioms which are commonly shared by the two solution concepts.

\subsection{Definitions and notations}

For any $n \geq 2$, let $N=\{1,...,n\}$ be a finite set of agents and let $S$ be a finite set of social alternatives (e.g. states, candidates, action profiles). Given a finite set $J$ and a real number $a_j$ for any $j \in J$, the notation $\bigwedge_{j \in J} a_j$ stands for the minimum value of $\{a_j : j \in J\}$. Each agent $i \in N$ is endowed with a utility function $u_i:S \to \mathcal{D}$, where $\mathcal{D}$ is a nonempty convex\footnote{The convexity assumption is necessary to obtain the proofs of Sections \ref{sectionimposs} and \ref{sectionBentham} but can be relaxed in the proof of  Section \ref{sectionrawls}.} subset of $\mathbb{R}$. We further denote by $\textbf{u}=(u_i)_{i \in N}$ the induced utility profile and by $\mathcal{U}=(\mathbb{\mathcal{D}}^S)^N$ the set of all possible utility profiles. \\

\noindent A \textit{solution concept} on $\mathcal{U}$ is a correspondence $\varphi : \mathcal{U} \rightrightarrows S$, i.e., $\varphi$ assigns to each utility profile $\textbf{u} \in \mathcal{U}$ a subset $\varphi(\textbf{u})$ of the social alternatives set $S$. 
Given $\textbf{u} \in \mathcal{U}$, the \textit{Bentham solution} $B: \mathcal{U}  \rightrightarrows S$ is defined by $$B(\textbf{u})=\argmax{s \in S}{\sum_{i \in N}u_i(s)}$$ 
Moreover, given $\textbf{u}\in \mathcal{U}$, the \textit{Rawls solution} $R: \mathcal{U}  \rightrightarrows S$ is defined by $$R(\textbf{u})=\argmax{s\in S}{\bigwedge_{i\in N}u_i(s)}.$$

\

\subsection{Common axiomatic properties} 

In this section, we present the axioms which are involved in the axiomatization of both solution concepts. The first axiom usually makes part of a social choice function's definition. \\

\noindent \textbf{\textsf{Nonemptiness}}: For any $\textbf{u} \in \mathcal{U}$, we have $\varphi(\textbf{u}) \neq \emptyset$.

\

\noindent We use the notation $\mathfrak{S}(N)$ for the set of permutations of $n$ players. For any $\sigma \in \mathfrak{S}(N)$, let us put $u_i^{\sigma}(s)=u_{\sigma(i)}(s)$. The induced utility profile is denoted by $\textbf{u}_{\sigma}=(u_i^{\sigma})$. The second axiom ensures that all agents are treated equally. \\

\noindent \textbf{\textsf{Anonymity}}: For any $\textbf{u} \in \mathcal{U}$, we have $\varphi(\textbf{u})=\varphi(\textbf{u}_{\sigma})$.

\

\noindent The following axiom is weaker than the standard axiom of Pareto optimality. It states that as long as there exist alternatives that are unanimously preferred, the solution must select one of them. Given $\textbf{u} \in \mathcal{U}$, let $$M(\textbf{u})=\left\{s^{*} \in S : \forall i \in N, \forall s\in S, u_i(s^*)\geq u_i(s)\right\}$$
\\
\noindent  \textbf{\textsf{Unanimity}}: If $M(\textbf{u}) \neq \emptyset$, then $\varphi(\textbf{u}) \subseteq M(\textbf{u})$.\\

\noindent Next axiom ensures that given a utility profile, we can always find sufficiently close to it, another utility profile sharing common solutions with the primal.\\

\noindent  \textbf{\textsf{Continuity}}: For each $a\in S$, the set $\mathcal{U}_a=\{\textbf{u}\, :\, a\in \varphi(\textbf{u}) \}$ is closed.

\

\section{Main results}\label{mainresults}

In this section, we display our main results. First, we provide an impossibility result concluding that there is no solution concept which satisfies \textsf{Nonemptiness}, \textsf{Anonymity}, \textsf{Unanimity} and \textsf{Subjective Expected Consistency} simultaneously. Thenceforth, we axiomatically characterize the Bentham and Rawls solutions by adding a \textsf{Continuity} axiom in both and replacing the last property of consistency with the \textsf{Objective Expected Consistency} and the \textsf{Minimum Consistency} axioms respectively.

\

\subsection{An impossibility result}\label{sectionimposs}
To formally introduce the first variation of the consistency axiom we denote by $\mathcal{P}_N$ the set of diagonal matrices of the following form:
$$
\textbf{p}_N=\begin{bmatrix}
p_{1} &  0  & \ldots & 0\\
0  &  p_{2} & \ldots & 0\\
\vdots & \vdots & \ddots & \vdots\\
0  &   0       &\ldots & p_{n}
\end{bmatrix},$$ 
where $p_i\in (0,1)$ for each $i\in N$.
We denote by $I$ the matrix in $\mathcal{P}_N$ where $p_i=1$ for any $i \in N$. 

\

As illustrated in the introduction, the axiom of \textsf{Subjective Expected Consistency} enforces consistent collective behaviour in situations where agents make use of subjective probabilities represented by a matrix in $\mathcal{P}_N$. We consider two distinct utility profiles $\textbf{u}$ and $\textbf{v}$ such that the sets of recommended alternatives to the agents overlap and assume that each agent $i \in N$ believes that $\textbf{u}$ occurs with probability $p_i$, while $\textbf{v}$ with probability $1-p_i$. Then, \textsf{Subjective Expected Consistency} indicates that, since there is an alternative selected for the agents in both utility profiles, this alternative should be also selected regardless of individual probabilistic beliefs. This principle can be formalized as follows:

\vspace{0.37cm}

\noindent  \textbf{\textsf{Subjective Expected Consistency}}: For all $\textbf{u},\textbf{v} \in \mathcal{U}$, if $\varphi(\textbf{u}) \cap \varphi(\textbf{v}) \neq \emptyset$, then for all $\textbf{p}_{N} \in \mathcal{P}_N$:
$$\varphi(\textbf{u}) \cap \varphi(\textbf{v}) \subseteq \varphi\left(\textbf{p}_{N} \cdot \textbf{u} + \left(I-\textbf{p}_{N}\right) \cdot \textbf{v}\right).\, \footnote{Given $\textbf{w}\in \mathcal{U}$ and $\textbf{q}_N \in \mathcal{P}_N$, the notation $\textbf{q}_N \cdot \textbf{w}$ stands for the standard matrix product of $\textbf{q}_N$ and $\textbf{w}$.}$$

\

\noindent We have now the material to set up Proposition \ref{impossibilitytheorem}.

\begin{Proposition}\label{impossibilitytheorem}
There is no solution concept which satisfies \textsf{Nonemptiness}, \textsf{Anonymity}, \textsf{Unanimity} and \textsf{Subjective Expected Consistency}.
\end{Proposition}

\begin{proof}
Suppose, for the sake of contradiction, that a solution $\varphi$ can be defined that satisfies the axioms of the theorem. To establish our claim of impossibility, it suffices to provide an example of a situation where the presence of such a $\varphi$ concludes to a contradiction. For this purpose, we return to the example, with Jeremy and John, outlined in the introduction and assume that $u_1(\ell)=v_2(\ell)=u_2(c)=v_1(c)=\alpha$ and $u_1(c)=v_2(c)=u_2(\ell)=v_1(\ell)=\beta$. Hence, we have:
$$\textbf{u}=\begin{bmatrix}
   \alpha  & \beta \\
   \beta  & \alpha
\end{bmatrix}\,\,\,\text{and}\,\,\,\,\textbf{v}=\begin{bmatrix}
   \beta  & \alpha \\
   \alpha  & \beta
\end{bmatrix}$$
Notice that $\textbf{v}$ corresponds to the permutation of the agents' utilities.
Then, using \textsf{Anonymity}, we obtain that $\varphi(\textbf{u})=\varphi(\textbf{v})$ and by \textsf{Nonemptiness} it clearly follows: 
\begin{align*}
\varphi(\textbf{u})\cap \varphi(\textbf{v})\neq \emptyset
\end{align*}
Therefore, by \textsf{Subjective Expected Consistency} we get: 
\begin{align*}
\varphi(\textbf{u}) \cap \varphi(\textbf{v}) \subseteq  \varphi\left(\begin{bmatrix}
 p_1\alpha+(1-p_1)\beta & p_1\beta+(1-p_1)\alpha  \\
 p_2\beta+(1-p_2)\alpha & p_2\alpha+(1-p_2)\beta
\end{bmatrix}\right)
\end{align*}
Reversing the roles of $\textbf{u}$ and $\textbf{v}$ and using again \textsf{Subjective Expected Consistency} we further have:
\begin{align*}\label{expect2}
\varphi(\textbf{u}) \cap \varphi(\textbf{v}) \subseteq  \varphi\left(\begin{bmatrix}
 p_1\beta+(1-p_1)\alpha & p_1\alpha +(1-p_1)\beta   \\
 p_2\alpha +(1-p_2)\beta  & p_2\beta +(1-p_2)\alpha 
\end{bmatrix}\right)
\end{align*}
Combining the last three formulas, we obtain:
\begin{equation}\label{finalnonemptiness}
\varphi\left(\begin{bmatrix}
 p_1\alpha+(1-p_1)\beta & p_1\beta+(1-p_1)\alpha  \\
 p_2\beta+(1-p_2)\alpha & p_2\alpha+(1-p_2)\beta
\end{bmatrix}\right) \cap \varphi\left(\begin{bmatrix}
 p_1\beta+(1-p_1)\alpha & p_1\alpha +(1-p_1)\beta   \\
 p_2\alpha +(1-p_2)\beta  & p_2\beta +(1-p_2)\alpha 
\end{bmatrix}\right)\neq \emptyset 
\end{equation}
Let us now assume that $p_1 \in \left(1/2,1\right]$, $p_2 \in [0,1/2)$ and $\alpha \geq \beta$. Then, we get: 
$$
\begin{cases}
p_1\alpha+(1-p_1)\beta \geq p_1\beta+(1-p_1)\alpha\\
p_2\beta+(1-p_2)\alpha \geq p_2\alpha+(1-p_2)\beta
\end{cases}
$$
and it thus follows that in the expected utility profile:
$$
\begin{bmatrix}
 p_1\alpha+(1-p_1)\beta & p_1\beta+(1-p_1)\alpha  \\
 p_2\beta+(1-p_2)\alpha & p_2\alpha+(1-p_2)\beta
\end{bmatrix}
$$
the left column which corresponds to the alternative ($\ell$) of the example is unanimously preferred. On the contrary, in the following expected utility profile:
$$
\begin{bmatrix}
 p_1\beta+(1-p_1)\alpha & p_1\alpha +(1-p_1)\beta   \\
 p_2\alpha +(1-p_2)\beta  & p_2\beta +(1-p_2)\alpha 
\end{bmatrix}
$$
the right column which corresponds to the alternative ($c$) of the example is unanimously preferred. Hence, by \textsf{Unanimity} and \textsf{Nonemptiness}, we get:
$$
\varphi\left(\begin{bmatrix}
 p_1\alpha+(1-p_1)\beta & p_1\beta+(1-p_1)\alpha  \\
 p_2\beta+(1-p_2)\alpha & p_2\alpha+(1-p_2)\beta
\end{bmatrix}\right)=\{\ell\}
$$
and further,
$$
\varphi\left(\begin{bmatrix}
 p_1\beta+(1-p_1)\alpha & p_1\alpha +(1-p_1)\beta   \\
 p_2\alpha +(1-p_2)\beta  & p_2\beta +(1-p_2)\alpha 
\end{bmatrix}\right)=\{c\}
$$
It therefore clearly follows:
$$
\varphi\left(\begin{bmatrix}
 p_1\alpha+(1-p_1)\beta & p_1\beta+(1-p_1)\alpha  \\
 p_2\beta+(1-p_2)\alpha & p_2\alpha+(1-p_2)\beta
\end{bmatrix}\right) \cap \varphi\left(\begin{bmatrix}
 p_1\beta+(1-p_1)\alpha & p_1\alpha +(1-p_1)\beta   \\
 p_2\alpha +(1-p_2)\beta  & p_2\beta +(1-p_2)\alpha 
\end{bmatrix}\right)=\emptyset
$$
which leads to a contradiction with formula (\ref{finalnonemptiness}).
\end{proof}
A significant assumption of Proposition \ref{impossibilitytheorem} is the heterogeneity of the agents' beliefs. Exploring the relevant literature, \cite{hylland1979impossibility} with respect to cardinal preferences and \cite{mongin1995consistent} regarding the ordinal preferences, show that subjective expected utility along with Pareto optimality lead to an impossibility result when agents' beliefs are diverse. Proposition \ref{impossibilitytheorem} points out an incompatibility between a property tightly linked to Pareto Optimality, the \textsf{Unanimity} axiom, and a form of mixture preserving property, the \textsf{Subjective Expected Consistency}. This effect keeps the coherence with other impossibility results, established in the frame of ordinal preferences, which are notably more challenging to prove. \\
Henceforth, we will study the case where $p_1=p_2=...=p_n$, which can be seen as a situation where agents possess an objective probability over the possible events.

\

\subsection{Axiomatic characterization of Bentham solution}\label{sectionBentham}
In this section, we provide an axiomatic characterization of the Bentham solution using \textsf{Nonemptiness}, \textsf{Unanimity}, \textsf{Anonymity}, \textsf{Continuity} axioms and the variation of the consistency axiom called \textsf{Objective Expected Consistency}. The latter property is based on consistent behaviour in related utility profiles with respect to a solution concept. \\
For this purpose, we study two utility profiles such that the sets of recommended alternatives to the agents overlap. Consider a situation where a coin toss decides which of the two utility profiles occurs. Then, \textsf{Objective Expected Consistency} prescribes that, since there is an alternative selected for the agents in both utility profiles, this alternative should also be selected before the coin toss. Namely, given two utility profiles and a probability $p$ over them, the solution set of their expected utility profile under $p$ includes the intersection of their solution sets unless it is empty. Formally,

\

\noindent  \textbf{\textsf{Objective Expected Consistency}}: For all $\textbf{u},\textbf{v} \in \mathcal{U}$, if $\varphi(\textbf{u}) \cap \varphi(\textbf{v}) \neq \emptyset$, then for all $p\in (0,1)$:
$$\varphi(\textbf{u}) \cap \varphi(\textbf{v}) \subseteq \varphi(p\textbf{u} + (1-p)\textbf{v})$$

\vspace{0.43cm}

\noindent Lemma \ref{lemmasum} will be useful in the proof of Theorem \ref{maintheorem}. 
\begin{Lemma}\label{lemmasum}
Assume that $\varphi$ satisfies \textsf{Objective Expected Consistency}. Let $m \geq 2$ and consider $\textbf{u}_1,\textbf{u}_2,...,\textbf{u}_m\in \mathcal{U}$. If there exists $s \in S$ such that $s \in \varphi(\textbf{u}_i)$ for each $i\in \{1,\ldots,m\}$, then $$s \in \varphi\left(\frac{1}{m}\sum_{i\in \{1,\ldots,m\}}\textbf{u}_i\right).$$
\end{Lemma}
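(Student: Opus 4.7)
The plan is to proceed by induction on $m\geq 2$.

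For the base case $m=2$, this is an immediate consequence of \textsf{Objective Expected Consistency} applied with $p=1/2$: since $s\in \varphi(\textbf{u}_1)\cap \varphi(\textbf{u}_2)$, this intersection is nonempty and hence included in $\varphi\bigl(\tfrac{1}{2}\textbf{u}_1+\tfrac{1}{2}\textbf{u}_2\bigr)$.

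For the inductive step, I would assume the claim for some $m-1\geq 2$ and prove it for $m$. The key algebraic identity is
\begin{equation*}
\frac{1}{m}\sum_{i=1}^{m}\textbf{u}_i \;=\; \frac{m-1}{m}\cdot \underbrace{\frac{1}{m-1}\sum_{i=1}^{m-1}\textbf{u}_i}_{=:\,\bar{\textbf{u}}} \;+\; \frac{1}{m}\,\textbf{u}_m,
\end{equation*}
which exhibits the average of $m$ profiles as a convex combination (with weight $p=(m-1)/m\in(0,1)$) of the average of the first $m-1$ and the last one. Applying the induction hypothesis to $\textbf{u}_1,\ldots,\textbf{u}_{m-1}$ gives $s\in \varphi(\bar{\textbf{u}})$, and by hypothesis $s\in \varphi(\textbf{u}_m)$. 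Thus $\varphi(\bar{\textbf{u}})\cap \varphi(\textbf{u}_m)\neq\emptyset$, and a single application of \textsf{Objective Expected Consistency} with $p=(m-1)/m$ yields $s\in \varphi\bigl(\tfrac{m-1}{m}\bar{\textbf{u}}+\tfrac{1}{m}\textbf{u}_m\bigr)=\varphi\bigl(\tfrac{1}{m}\sum_{i=1}^{m}\textbf{u}_i\bigr)$, completing the induction.

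There is essentially no obstacle here: the lemma is just the observation that the pairwise consistency axiom automatically extends to uniform averages of arbitrarily many profiles, and the inductive choice of the convex weights does all the work. The only point requiring a little care is to check that the weight $p=(m-1)/m$ lies in the open interval $(0,1)$ as required by the axiom, which holds for every $m\geq 2$.
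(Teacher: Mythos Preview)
Your proof is correct and follows essentially the same approach as the paper's: induction on $m$, with the base case given directly by \textsf{Objective Expected Consistency} at $p=1/2$, and the inductive step obtained by writing the $m$-fold average as a convex combination (with weight $(m-1)/m$) of the $(m-1)$-fold average and the last profile.
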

\begin{proof}
Given $\textbf{u}_1$, $\textbf{u}_2$,..., $\textbf{u}_m \in \mathcal{U}$, assume that there exists $s \in S$ such that for any $i \in \{1,\ldots,m\}$, $s\in \varphi(\textbf{u}_i)$. We proceed by induction on $m$. If $m=2$, then the result is straightforward by \textsf{Objective Expected Consistency}. We assume that the result is true for $m>2$. It then follows,
\begin{align*}
s \in \varphi\left(\frac{1}{m}\sum_{i\in \{1,\ldots,m\}}\textbf{u}_i\right).
\end{align*}
Let us put $\textbf{w}=\frac{1}{m}\sum_{i\in \{1,\ldots,m\}}\textbf{u}_i$ and consider $\textbf{u}_{m+1} \in \mathcal{U}$, such that $s\in \varphi(\textbf{u}_{m+1})$. \\
Hence, by \textsf{Objective Expected Consistency}, we get
\begin{align*}
s\in \varphi\left(\frac{m}{m+1}\textbf{w}+\frac{1}{m+1}\textbf{u}_{m+1}\right)=\varphi\left(\frac{1}{m+1}\sum_{i\in \{1,\ldots,m+1\}}\textbf{u}_i\right)
\end{align*} 
which concludes the proof.
\end{proof}

\

\noindent In his seminal paper, \cite{harsanyi1955cardinal} proposes an axiomatic foundation of utilitarianism wherein a criterion closely related to \textsf{Objective Expected Consistency} together with Pareto optimality imply social utility to be a combination of individual utilities.
\\
Thereafter, we provide Theorem \ref{maintheorem}, which consists an axiomatic characterization of the Bentham solution. It certainly has a flavor of the \citet{harsanyi1955cardinal}'s result on the aggregation of ordinal preferences the proof of which, as previously mentioned, is quite tough. Our proof remains simpler within the context of cardinal preferences.

\begin{Theorem}\label{maintheorem}
A solution concept satisfies \textsf{Nonemptiness}, \textsf{Anonymity}, \textsf{Unanimity}, \textsf{Continuity} and \textsf{Objective Expected Consistency} if and only if it is the Bentham solution.
\end{Theorem}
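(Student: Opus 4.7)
The easy direction (Bentham satisfies all five axioms) is routine: Nonemptiness follows from finiteness of $S$, Anonymity and Objective Expected Consistency from symmetry and linearity of $\sum_i u_i$, Unanimity because any alternative maximizing every $u_i$ also maximizes the sum, and Continuity because $\textbf{u} \mapsto \argmax_{s \in S} \sum_i u_i(s)$ has closed graph on a finite codomain. I would dispatch this in a few lines and spend the bulk of the proof on the converse.

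For the converse I would establish two inclusions separately. First, $\varphi(\textbf{u}) \subseteq B(\textbf{u})$: given $s \in \varphi(\textbf{u})$, Anonymity yields $s \in \varphi(\textbf{u}_\sigma)$ for every permutation $\sigma \in \mathfrak{S}(N)$, and applying Lemma \ref{lemmasum} to the $n!$ permuted profiles gives $s \in \varphi(\bar{\textbf{u}})$ where $\bar{\textbf{u}} := \frac{1}{n!} \sum_\sigma \textbf{u}_\sigma$. A direct count shows that every row of $\bar{\textbf{u}}$ equals $\bar{u}(s) = \frac{1}{n}\sum_j u_j(s)$, so all agents share the same utility function in $\bar{\textbf{u}}$; hence $M(\bar{\textbf{u}}) = \argmax_{s \in S} \bar{u}(s) = B(\textbf{u})$, which is nonempty by finiteness of $S$. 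Unanimity then forces $s \in \varphi(\bar{\textbf{u}}) \subseteq M(\bar{\textbf{u}}) = B(\textbf{u})$.

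Second, $B(\textbf{u}) \subseteq \varphi(\textbf{u})$: fix $t \in B(\textbf{u})$. The idea is to build a sequence $\textbf{u}^k \to \textbf{u}$ of profiles in $\mathcal{U}$ for which $t$ is the \emph{unique} Bentham maximizer, so that the first inclusion combined with Nonemptiness forces $\varphi(\textbf{u}^k) = \{t\}$, and then Continuity transfers $t \in \varphi(\textbf{u})$. Assuming $\mathcal{D}$ contains two points $d_1 > d_2$ (the one-point case is degenerate), define a ``target-promoting'' profile $\textbf{v}$ by $v_i(t) = d_1$, $v_i(s) = d_2$ for $s \neq t$, and set $\textbf{u}^k := \tfrac{k-1}{k} \textbf{u} + \tfrac{1}{k} \textbf{v}$. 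Convexity of $\mathcal{D}$ keeps $\textbf{u}^k \in \mathcal{U}$, and for every $s \neq t$,
\begin{equation*}
\sum_i u^k_i(t) - \sum_i u^k_i(s) = \tfrac{k-1}{k}\Bigl(\sum_i u_i(t) - \sum_i u_i(s)\Bigr) + \tfrac{n(d_1 - d_2)}{k} > 0,
\end{equation*}
since the first term is nonnegative by $t \in B(\textbf{u})$. Hence $B(\textbf{u}^k) = \{t\}$.

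The main obstacle is the second inclusion: the perturbation must sharpen $t$ into the unique Bentham optimum while respecting the domain constraint $u^k_i(s) \in \mathcal{D}$. Additive shifts would fail if $\mathcal{D}$ is bounded, so the key trick is to mix $\textbf{u}$ with a fixed profile $\textbf{v}$ that already strictly favours $t$ for every agent, exploiting only the convexity of $\mathcal{D}$. Once this is set up, the first inclusion does most of the work and Continuity closes the argument.
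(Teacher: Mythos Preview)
Your proof follows essentially the same two-step structure as the paper's: symmetrize via Anonymity and Lemma~\ref{lemmasum} to reduce to a profile in which all agents share the same utility function, so that Unanimity forces membership in $B(\textbf{u})$; then perturb to make a given Bentham maximizer strict and invoke Nonemptiness plus Continuity. The differences are minor. In Step~1 you average over all $n!$ permutations where the paper uses only the $n$ powers of a single cyclic permutation; both produce the same constant-row profile, the paper's choice being merely more economical. In Step~2 your argument is actually more careful than the paper's: the paper perturbs additively via $w_i(a)=u_i(a)+\tfrac{1}{n}$, which need not lie in $\mathcal{D}$ when $\mathcal{D}$ is bounded above, whereas your convex-combination perturbation $\textbf{u}^k=\tfrac{k-1}{k}\textbf{u}+\tfrac{1}{k}\textbf{v}$ remains in $\mathcal{U}$ using only the convexity of $\mathcal{D}$.
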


\begin{proof}
We prove only the sufficiency of the axioms. We proceed in two steps.

\

\noindent{\textbf{Step 1.}} \textit{If $\varphi$ satisfies \textsf{Anonymity}, \textsf{Unanimity} and \textsf{Objective Expected Consistency}, then we have $\varphi(\textbf{u})\subseteq B(\textbf{u})$}.

\

\noindent Let $a$ belong to $\varphi\left(\textbf{u}\right)$ and let $\sigma \in \mathfrak{S}(N)$ be the cyclic permutation on $N$ defined as: $\sigma(1)=2, \sigma(2)=3,\ldots, \sigma(n-1)=n$ and $\sigma(n)=1$. 
Let $\sigma^k$ represent the permutation obtained by composing $\sigma$ $k$-times, such that $\sigma^k(r)=k+r$ if $k+r\leq n$ and $\sigma^k(r)=n-k+r$ if $k+r>n$. By \textsf{Anonymity}, for each $k\in \{1,\ldots,n\}$, $a\in \varphi(\textbf{u}_{\sigma^k})$. Since by assumption $\varphi$ verifies \textsf{Objective Expected Consistency}, Lemma \ref{lemmasum} with $m=n$ implies that,
\begin{align*}
a\in \varphi\left(\frac{1}{n}\sum_{k=1}^n\textbf{u}_{\sigma^k}\right).
\end{align*}
Let $\textbf{v}$ be the utility profile defined by $\textbf{v}=\sum_{k=1}^n\textbf{u}_{\sigma^k}$. It is easy to observe that for each alternative $s \in S$ and each individual $i \in N$, 
$$v_i(s)=\sum_{j\in N}u_j(s).$$
Therefore, for any $i,j \in N$ and any $s \in S$ we have $v_i(s)=v_j(s)$.
By Unanimity, it follows that $$\varphi\left(\frac{1}{n}\sum_{k=1}^n\textbf{u}_{\sigma^k}\right)\subseteq B(\textbf{u}),$$
and we thus get $a\in B(\textbf{u})$.

\

\noindent{\textbf{Step 2.}} \textit{If $\varphi$ satisfies \textsf{Nonemptiness}, \textsf{Anonymity}, \textsf{Unanimity}, \textsf{Continuity} and \textsf{Objective Expected Consistency}, then we have $B(\textbf{u}) \subseteq \varphi(\textbf{u}) $}.

\noindent Let $a\in B(\textbf{u})$ and $(\textbf{w}^n)_{n\in \mathbb{N}^*}$ be a sequence of utility profiles defined for each $i\in N$ and $n \in \mathbb{N}^*$, by $w_i(a)=u_i(a)+\frac{1}{n}$ and for any $s\neq a$, by $w_i(s)=u_i(s)$. It is clear that for any $n\in \mathbb{N}^*$, $B(\textbf{w}^n)=\{a\}$. The axiom of \textsf{Nonemptiness} and the Step 1 imply that for each $n\in \mathbb{N}^*$, $\varphi(\textbf{w}^n)=\{a\}$. Since $(\textbf{w}^n)_{n\in \mathbb{N}^*}$ converges to $\textbf{u}$ when $n$ goes to infinity, the axiom of \textsf{Continuity} implies that $a\in \varphi(\textbf{u})$, as desired.

\end{proof}

\
\begin{Remark}
    It is worth noting that Proposition \ref{impossibilitytheorem} can be deduced from the following observations:
    \begin{enumerate}
        \item From the first step of the proof, any solution that satisfies \textsf{Anonymity}, \textsf{Unanimity}, and \textsf{Objective Expected Consistency} is a subsolution of the Bentham solution.
        \item A solution that satisfies \textsf{Subjective Expected Consistency} will also satisfy \textsf{Objective Expected Consistency}.
        \item However, the Bentham solution does not satisfy \textsf{Subjective Expected Consistency}.
    \end{enumerate}
    To bolster the last point, we can reference a counterexample similar to the one used in establishing Proposition \ref{impossibilitytheorem}.
\end{Remark}

\subsection{Axiomatic characterization of Rawls solution}\label{sectionrawls}
In this section, we contribute an axiomatic characterization of the Rawls solution using \textsf{Nonemptiness}, \textsf{Unanimity}, \textsf{Anonymity}, \textsf{Continuity} axioms and the variation of the consistency axiom named \textsf{Minimum Consistency}. Accordingly, we identify a single axiom which distinguishes Rawls and Bentham solutions. \\
In doing so, we introduce an alternative formulation of the consistency axiom that is more appropriate in situations of complete uncertainty. In such a case, there is no information regarding the probability whereby a utility profile occurs (either $\textbf{u}$ or $\textbf{v}$) and as a consequence, it is not possible to use expected utility. A natural and widely used in the literature procedure is to adopt a pessimistic approach and assume the worst possible outcome concerning the agents' utilities; namely, the worst-case scenario profile, represented by $\mathbf{u\wedge v}$, is defined for any player $i \in N$ and any alternative $s \in S$ by
\begin{align*}
[\mathbf{u\wedge v}]_i(s)=u_i(s)\wedge v_i(s)
\end{align*}

The following criterion of collective decision in uncertain environment states that as long as there exists an alternative selected by the society in both utility profiles then this alternative should also be selected by the society in the worst-case scenario profile $\mathbf{u\wedge v}$. This principle can be written formally as follows:

\vspace{0.37cm}

\noindent  \textbf{\textsf{Minimum Consistency}}: For all $\textbf{u},\textbf{v} \in \mathcal{U}$, if $\varphi(\textbf{u}) \cap \varphi(\textbf{v}) \neq \emptyset$, then:
$$\varphi(\textbf{u}) \cap \varphi(\textbf{v}) \subseteq \varphi(\mathbf{u\wedge v})$$

\

\begin{Theorem}\label{maintheorem2}
A solution concept satisfies \textsf{Nonemptiness}, \textsf{Anonymity}, \textsf{Unanimity}, \textsf{Continuity} and \textsf{Minimum Consistency} if and only if it is the Rawls solution.
\end{Theorem}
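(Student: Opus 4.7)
I would mirror the two-step structure of the proof of Theorem \ref{maintheorem}, replacing the averaging operation by the pointwise minimum. Because $\wedge$ is associative and idempotent, the analogue of Lemma \ref{lemmasum} is even easier: a straightforward induction on $m$ via \textsf{Minimum Consistency} shows that if $s\in\varphi(\textbf{u}_i)$ for every $i\in\{1,\dots,m\}$, then $s\in \varphi\bigl(\bigwedge_{i=1}^m \textbf{u}_i\bigr)$, where the meet of profiles is taken pointwise. Necessity of the five axioms for the Rawls solution is routine, so I would focus on sufficiency.

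\textbf{Step 1: $\varphi(\textbf{u})\subseteq R(\textbf{u})$.} Assume $\varphi$ satisfies \textsf{Anonymity}, \textsf{Unanimity} and \textsf{Minimum Consistency}, and fix $a\in\varphi(\textbf{u})$. Reusing the cyclic permutation $\sigma$ from the proof of Theorem \ref{maintheorem}, \textsf{Anonymity} gives $a\in\varphi(\textbf{u}_{\sigma^k})$ for each $k\in\{1,\dots,n\}$. The iterated version of \textsf{Minimum Consistency} described above then produces $a\in\varphi(\textbf{v})$, where $\textbf{v}=\bigwedge_{k=1}^{n}\textbf{u}_{\sigma^k}$. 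The key observation is that as $k$ runs over $\{1,\dots,n\}$, the value $\sigma^k(i)$ exhausts $N$ for every fixed $i$, so
$$v_i(s)=\bigwedge_{j\in N}u_j(s)\qquad\text{for all }i\in N,\ s\in S,$$
and in particular $v_i$ does not depend on $i$. Consequently the alternatives that maximize $\bigwedge_{j\in N}u_j(\cdot)$, namely $R(\textbf{u})$, are unanimously preferred in $\textbf{v}$: $M(\textbf{v})=R(\textbf{u})$. \textsf{Unanimity} then gives $a\in\varphi(\textbf{v})\subseteq M(\textbf{v})=R(\textbf{u})$.

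\textbf{Step 2 and the main obstacle.} For the reverse inclusion, I would use the same perturbation trick as in Theorem \ref{maintheorem}. Given $a\in R(\textbf{u})$, define $\textbf{w}^n$ by $w_i^n(a)=u_i(a)+\frac{1}{n}$ and $w_i^n(s)=u_i(s)$ for $s\neq a$. Since $a\in R(\textbf{u})$,
$$\bigwedge_{i\in N}w_i^n(a)=\bigwedge_{i\in N}u_i(a)+\frac{1}{n}>\bigwedge_{i\in N}u_i(s)=\bigwedge_{i\in N}w_i^n(s)$$
for every $s\neq a$, so $R(\textbf{w}^n)=\{a\}$. By \textsf{Nonemptiness} combined with Step 1, $\varphi(\textbf{w}^n)\subseteq R(\textbf{w}^n)=\{a\}$ forces $\varphi(\textbf{w}^n)=\{a\}$; since $\textbf{w}^n\to\textbf{u}$, \textsf{Continuity} yields $a\in\varphi(\textbf{u})$. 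The substantive point in the whole argument is really the identity $v_i(s)=\bigwedge_{j}u_j(s)$, which relies on the cyclic group acting transitively on $N$ and is the exact counterpart of what makes the $n$-fold average constant across agents in the Bentham proof; everything else is a near-verbatim translation with $\frac{1}{m}\sum$ replaced by $\bigwedge$. A minor technical caveat, consistent with the paper's footnote noting that convexity of $\mathcal{D}$ is dispensable here, is that one needs $u_i(a)+\frac{1}{n}\in\mathcal{D}$: this is automatic when $\mathcal{D}$ is unbounded above, and otherwise one perturbs the utilities at the alternatives $s\neq a$ downward by $\frac{1}{n}$ to obtain the same effect.
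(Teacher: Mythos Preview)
Your proof is correct and essentially identical to the paper's own argument: the same cyclic-permutation symmetrization via \textsf{Anonymity} and iterated \textsf{Minimum Consistency} to reduce to the unanimous profile $v_i(s)=\bigwedge_{j\in N}u_j(s)$ in Step~1, and the same $+\frac{1}{n}$ perturbation plus \textsf{Nonemptiness} and \textsf{Continuity} in Step~2. Your closing remark about the domain $\mathcal{D}$ (and the downward-perturbation alternative) is a sensible observation that the paper glosses over.
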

\begin{proof}
           
We prove only the sufficiency of the axioms. We proceed in two steps. 

\

\noindent{\textbf{Step 1.}} \textit{If $\varphi$ satisfies \textsf{Nonemptiness}, \textsf{Anonymity}, \textsf{Unanimity} and \textsf{Minimum Consistency}, then we have $\varphi(\textbf{u})\subseteq R(\textbf{u})$}.

\

\noindent Let $a$ belong to $\varphi\left(\textbf{u}\right)$ and let $\sigma \in \mathfrak{S}(N)$ be the cyclic permutation on $N$ defined as in the Step 1 of Theorem \ref{maintheorem}. By \textsf{Anonymity}, we get that for each $k\in \{1,\ldots,n\}$, $a\in \varphi(\textbf{u}_{\sigma^k})$. Since by assumption $\varphi$ verifies \textsf{Minimum Consistency}, a simple induction on $n$ implies that, 
\begin{align*}
a\in \varphi\left(\bigwedge_{k=1}^n\textbf{u}_{\sigma^k}\right).
\end{align*}
Let $\textbf{v}$ be the utility profile defined by $\textbf{v}=\bigwedge_{k=1}^n\textbf{u}_{\sigma^k}$. It is easy to see that for each alternative $s \in S$ and each individual $i \in N$, we have
\begin{align*}
v_i(s)=\bigwedge_{j\in N}u_j(s).
\end{align*}
Therefore, for any $i,j \in N$ and for any $s \in S$ we have $v_i(s)=v_j(s)$.
Then, by Unanimity,
\begin{align*}
\varphi\left(\bigwedge_{k=1}^n\textbf{u}_{\sigma^k}\right)\subseteq R(\textbf{u}),
\end{align*}
and we thus get $a\in R(\textbf{u})$.

\

\noindent{\textbf{Step 2.}} \textit{If $\varphi$ satisfies \textsf{Nonemptiness}, \textsf{Anonymity}, \textsf{Unanimity}, \textsf{Minimum Consistency} and \textsf{Continuity} then we have $R(\textbf{u}) \subseteq \varphi(\textbf{u})$}.

\

\noindent Let $a \in R(\textbf{u})$ and $(\textbf{w}^n)_{n\in \mathbb{N}^*}$ be a sequence of utility profiles defined for each $i\in N$ and $n \in \mathbb{N}^*$, by $w_i(a)=u_i(a)+\frac{1}{n}$ and for each $s\neq a$, by $w_i(s)=u_i(s)$. It is clear that for any $n\in \mathbb{N}^*$, $R(\textbf{w}^n)=\{a\}$. The axiom of \textsf{Nonemptiness} and the Step 1 imply that for each $n\in \mathbb{N}^*$, $\varphi(\textbf{w}^n)=\{a\}$. Using the axiom of \textsf{Continuity}, since $(\textbf{w}^n)_{n\in \mathbb{N}^*}$ converges to $\textbf{u}$ when $n$ goes to infinity, we conclude that $a\in \varphi(\textbf{u})$, as desired.
\end{proof}

\

The proofs of Theorem \ref{maintheorem} and Theorem \ref{maintheorem2} follow quite similar arguments and steps. It is easy to check that both proofs can accommodate a weaken version of the \textsf{Unanimity} axiom without affecting their validity. This weaken version can be achieved as follows: if all individuals share the same utility function, represented by $\textbf{u}=(u_i)_{i\in N}$, i.e., $u_i=u_j$ for all $i,j\in N$, then $\varphi(\textbf{u}) \subseteq M(\textbf{u})$. This alternative axiom, akin to the notion of faithfulness proposed by \cite{young1974axiomatization} for the Borda choice correspondence, is clearly implied by \textsf{Unanimity} axiom.

\

\subsection{Independence of the axioms}

In this section, we show the independence of the axioms involved in Theorem \ref{maintheorem} and Theorem \ref{maintheorem2} (the formal proof is left to the reader). Accordingly, we need to define some further solution concepts. We first recall that the unanimous solution $M$ is defined for each $\textbf{u}\in \mathcal{U}$ by $$M(\textbf{u})=\left\{s^{*} \in S : \forall i \in N, \forall s\in S, u_i(s^*)\geq u_i(s)\right\}.$$ Let $i\in N$ be an arbitrary agent, dubbed the dictator. We say that $dict^i: \mathcal{U} \rightrightarrows S$ is the \textit{dictatorial solution} with respect to $i$, if for each $\textbf{u}\in \mathcal{U}$, $$dict^i(\textbf{u})=\argmax{s\in S}{u_i(s)}.$$ We say that $AB: \mathcal{U} \rightrightarrows S$ is the \textit{anti-Bentham solution}, if for each $\textbf{u}\in \mathcal{U}$, $$AB(\textbf{u})=B(-\textbf{u}).$$ 
In addition, we define $\widetilde{R}:\mathcal{U}\rightrightarrows S $ to be the $\prec$-\textit{subRawls} solution, where $\prec$ is a linear order of $S$. Let us denote $\min_{\prec}X$ the minimal elements of $X\subseteq S$ with respect to $\prec$. $\widetilde{R}$ is defined for each $u\in \mathcal{U}$ by $\widetilde{R}(\textbf{u})=\min_{\prec}R(\textbf{u})$.\\
Furthermore, we define $\widetilde{B}:\mathcal{U}\rightrightarrows S $ to be the $\prec$-\textit{subBentham} solution, where $\prec$ is a linear order of $S$. $\widetilde{B}$ is defined for each $u\in \mathcal{U}$ by $\widetilde{B}(\textbf{u})=\min_{\prec}B(\textbf{u})$.
The following table illustrates the independence of the axioms appeared in Theorem \ref{maintheorem} and Theorem \ref{maintheorem2}. Indeed, regarding each axiomatization, we can always find another solution concept that does not satisfy simultaneously all the four axioms. 
 \medskip 
\begin{center}
\begin{tabular}{l|c|c|c|c|c|c|} 

          & NE & A  & U & C & OEC & MC \\ 
  \hline
 $M$      & no & yes & yes & yes & yes & yes \\ 
 $dict^i$ & yes & no & yes & yes & yes & yes \\ 
 $AB$     & yes & yes & no & yes & yes & no \\ 
 $\widetilde{R}$  & yes & yes & yes & no & no  & yes \\
 $\widetilde{B}$  & yes & yes & yes & no & yes  & no \\
 $R$      & yes & yes & yes & yes & no  & yes  \\ 
 $B$      & yes & yes & yes & yes & yes &no \\
 \hline
\end{tabular}
\end{center}

\

\section{Conclusion}\label{conclusion}

\hspace{0.4cm} In this article, three variations of the widely known consistency property are introduced. Using the former called \textsf{Subjective Consistency}, we obtain an impossibility result, using the second named \textsf{Objective Consistency}, we establish an axiomatization of classical utilitarianism and using the latter referred to as \textsf{Minimum Consistency}, we characterize the egalitarian principle.
\\
It is further revealed that although the two solutions are philosophically in conflict, their axiomatizations are closely related; that is to say, all four of the following properties: \textsf{Nonemptiness}, \textsf{Unanimity}, \textsf{Anonimity}, and \textsf{Continuity} are commonly shared by the two solution concepts and moreover the proofs of their axiomatic foundations follow a quite similar reasoning. As a result, the two characterizations disagree only with the choice of the appropriate, with respect to each concept, variation of consistency. On the one hand, utilitarianism calls for the \textsf{Objective Consistency} property and therefore, potential connections with the cornerstone work of \cite{harsanyi1955cardinal} can been achieved provided that individuals conform to the vNM's EU model. On the other hand, egalitarianism deals with the \textsf{Minimum Consistency} axiom and thus an interpretation of how a choice function should aggregate cardinal preferences as long as individuals comply with the max-min criterion, is now possible.

\

\

\section*{Acknowledgments}
This research was funded by the French National Research Agency (ANR) under the Citizens project "ANR-22-CE26-0019-01". Financial support of MODMAD is also gratefully acknowledged. The authors also want to thank Marco Li Calzi, Eric R\'emila and Vassili Vergopoulos for valuable comments.

\bibliographystyle{plainnat}
\bibliography{biblio}
\end{document}